\pdfoutput=1
\documentclass[amsart]{revtex4-2}

\usepackage{beitmacros}
\usepackage{hyperref, hypernat}
\usepackage{enumitem}
\usepackage{multirow}
\usepackage{fixme}

\newcommand{\dg}{\dagger}

\newcommand{\nuc}[1]{||{#1}||_{\textrm{nuc}}}
\newcommand{\Err}{\textrm{Err}}
\newcommand{\Total}{\textrm{Total}}
\newtheorem{prop}{Proposition}

\newcommand{\beit}{\affiliation{BEIT sp. z o o., Mogilska 43, 31-545 Kraków, Poland}}%
\newcommand{\beitemailkd}{\email[]{konrad@beit.tech}}%
\newcommand{\beitemailez}{\email[]{emil@beit.tech}}%
\newcommand{\beitweb}{\homepage{https://www.beit.tech}}%


\begin{document}
\title{Simultaneously optimizing symmetry shifts and tensor factorizations 
for cost-efficient Fault-Tolerant Quantum Simulations of electronic Hamiltonians}
\author{Konrad Deka and Emil Zak}\beit\beitemailkd\beitemailez\beitweb

\date{\today}

\begin{abstract}
In fault-tolerant quantum computing, the cost of calculating Hamiltonian eigenvalues using the quantum phase estimation algorithm is proportional to the constant scaling the Hamiltonian matrix block-encoded in a unitary circuit. We present a method to reduce this scaling constant for the electronic Hamiltonians represented as a linear combination of unitaries.
Our approach combines the double tensor-factorization method of Burg et al.~\cite{burg} with the the block-invariant symmetry shift method of Loaiza and Izmaylov~\cite{izmaylov1}. By extending the electronic Hamiltonian with appropriately parametrized symmetry operators and optimizing the tensor-factorization parameters, our method achieves a 25\% reduction in the block-encoding scaling constant compared to previous techniques. 
The resulting savings in the number of non-Clifford T-gates, which are an essential resource for fault-tolerant quantum computation, are expected to accelerate the feasiblity of practical Hamiltonian simulations. We demonstrate the effectiveness of our technique on Hamiltonians of industrial and biological relevance, including the nitrogenase cofactor (FeMoCo) and cytochrome P450.
\end{abstract}

\maketitle

\section{Introduction} 
One of the most promising applications of quantum computing is the simulation of quantum chemistry Hamiltonians, a task that exceeds the capabilities of classical computing for certain classes of industrially relevant molecules. A central focus in this field is the computation of Hamiltonian eigenvalues, which have been extensively studied theoretically~\cite{burg, thc, Low2019, Berry2019}. The eigenvalues of electronic Hamiltonians correspond to the energy levels of electronic states. Differences in these energy levels for varying molecular structures and compositions inform about the relative stability of compounds and pathways for chemical transformations. Calculating reaction pathway diagrams is pivotal for designing innovative materials, including solvents, donor/acceptor molecules for energy storage~\cite{li-ionpaper}, catalysts for nitrogen fixation~\cite{Beinert1997, Reiher} and for artificial photosynthesis~\cite{losalamos24}.

In this work, we focus on two molecules for which quantum chemistry calculations at the \textit{chemical accuracy} of 1 mHartree or better are currently beyond the reach of classical computers~\cite{Goings2022,nike}. The first is the FeMo cofactor (Fe$_7$MoS$_9$C) of the nitrogenase enzyme, a catalyst participating in the biological nitrogen fixation process~\cite{Reiher}. The second system is the Cpd I structure in the cytochrome P450 enzyme~\cite{Shaik2009}, a protein that governs the metabolism of drugs in humans~\cite{Liu2007,Boobis2009}. Structurally similar to the heme molecule in the hemoglobin protein~\cite{Cho2008}, Cpd I represents a class of biologically significant molecules. 

Density Matrix Renormalization Group (DMRG) studies of FeMoCo~\cite{nike} and Cpd I~\cite{Goings2022} suggest that quantum computation can be advantageous for these systems. The largest DMRG calculations to date for FeMoCo~\cite{nike}, in the spin $S=3/2$ ground state with 76 orbitals and 113 electrons (forming a Hilbert space of approximately $10^{35}$ elements), highlight the need for carefully choosing the basis set to adequately capture electron correlation effects, particularly for open-shell molecules. For this reason, achieving chemical accuracy for these compounds with reliable error estimates remains challenging for classical computing methods~\cite{Marti2010,nike,Elfving2020HowWQ}.

Quantum computational methods, in principle, enable removing limitations of classical algorithms for systems with high complexity in three key aspects. One limitation is memory, which quantum computation addresses by using qubits. Another is the ability to calculate eigenvalues of a given input Hamiltonian with arbitrary precision and accuracy, directly controlled by the simulation parameters. Like classical computation, quantum computation utilizes a basis set representation of the electronic Hamiltonian, which can be systematically extended and improved, to increase the calculation accuracy, as ensured by the variational principle~\cite{Helgaker2000}. Finally, the total computation time can be shorter for a quantum computer implementation~\cite{Elfving2020HowWQ}. 

Quantifying the advantage carried by a quantum computation over classical methods for electronic structure calculations requires a method for quantum resource estimation. On the classical computing end, it is necessary to determine the accuracy of the energy levels and the associated computational complexity for a given Hamiltonian. The verdict of quantum advantage depends on these metrics and is a subtle, actively debated topic~\cite{Lee2023,Daley2022,krol2024}. Quantum resource estimation is fundamental to assessing quantum advantage as it involves counting the number of qubits and non-Clifford T-gates, with the latter setting the runtime for the quantum computation~\cite{gidney2024}. Reducing this runtime is essential for advancing the practical utility of a quantum computation. In this work, we focus on minimizing this algorithmic runtime to accelerate quantum electronic structure computation and to reduce their overall cost. 

For our quantum computation framework we choose quantum phase estimation (QPE)~\cite{Abrams1999}, being the most efficient algorithm for calculating Hamiltonian eigenvalues on a fault-tolerant quantum computer~\cite{thc}. In QPE, the phases generated by time-evolution under Hamiltonian $H$ are accumulated by appropriate input eigenstates of the Hamiltonian at a controllable precision $\epsilon$. The time evolution operator, $e^{iHt}$, a core component of QPE, is most efficiently implemented for electronic Hamiltonians using qubitization~\cite{Low2019,Berry2019,burg}. To perform qubitization, one needs to construct a \emph{block encoding} of the Hamiltonian $H$, that is, a quantum circuit $C$ whose unitary matrix contains $H / \lambda$ (where $\lambda > 0$) as a submatrix. The implementation of $e^{iHt}$ then requires $\mathcal{O}(\lambda t)$ repetitions of circuit $C$.
Since the runtime of the QPE algorithm scales linearly with the block-encoding scaling constant $\lambda$, lowering its value has been subject to several recent studies~\cite{burg,thc,izmaylov1,cohn,oumarou, ollitrault}. 
In this work we consider the electronic Hamiltonian studied also in Burg et al.~\cite{burg}, which can be written as:
\begin{equation} \label{eqn:hamiltonian_form_first}
	H = \sum_{i, j, \sigma} h_{ij} a_{i \sigma}^\dg a_{j \sigma}
	+ \sum_{i, j, \sigma, k, l, \tau } g_{ijkl} a_{i \sigma}^\dg a_{j \sigma} a_{k \tau}^\dg a_{l \tau},
\end{equation}
where $0 \le i, j, k, l < N$ are the orbital indices, $\sigma$ and $\tau$ denote spin indices. $a_{i\sigma}$ annihilates electron in spin-orbital $(i,\sigma)$. $h_{ij}$ represents the one-body terms, and $g_{ijkl}$ describes two-body electron-electron interactions. Matrix elements of the electronic Hamiltonians for FeMoCo and Cpd I were reported in previous works~\cite{Reiher,burg,rocca,Goings2022}, which also discuss the associate error estimates. For demonstrating our technique, here we adapt the electronic Hamiltonians from refs~\cite{Reiher,Goings2022} for FeMoCo and Cpd I (cytochrome P450), respectively. 

The block-encoding scaling constant $\lambda$ can be lowered by representing the electron-electron interaction tensor $g_{ijkl}$ with lower-rank tensors. Burg et al.~\cite{burg}, introduced the \textit{double factorization} algorithm (DF), for the electronic ground state energy calculations using a Hamiltonian represented with lower-rank tensors. 
DF involves performing eigenvalue decomposition of the $g_{ijkl}$ tensor twice, giving a block-encoding scaling constant we denote as $\lambda_{DF}$. Since then, further improvements in the value of $\lambda$ have been made by choosing alternative decomposition schemes, including \emph{tensor hypercontraction} method of Lee et al.~\cite{thc}, \emph{compressed double factorization} (CDF) by Cohn et al.~\cite{cohn} and \emph{regularized compressed double factorization} (RCDF) by Oumarou et al.~\cite{oumarou}.

A qualitative progress in lowering $\lambda$ came through with a \textit{block-invariant symmetry shifts} (BLISS) method by Izmaylov and Loaiza~\cite{izmaylov1}. BLISS assumes the Hamiltonian $H$ is applied to a state $\ket{\psi}$, meaning that it is sufficient to consider a different Hamiltonian  $\tilde{H}$,
as long as $(H - \tilde{H}) \ket{\psi} = 0$. In the case of electronic Hamiltonians, the state $\ket{\psi}$ that enters the QPE algorithm describes a fixed number of electrons, i.e $(N_e - n_e) \ket{\psi} = 0$ where $n_e$ is a positive integer and $N_e := \sum_{i \sigma} a_{i \sigma}^\dg a_{i \sigma}$ is the total electron number operator.
In ref.~\cite{izmaylov1} the symmetry-shifted Hamiltonian is chosen as $\tilde{H} := H + B(N_e - n_e)$, where 
$B$ stands for a Hermitian operator. Izmaylov and Loaiza apply this observation to the double-factorization technique as well as other factorizations~\cite{izmaylov1, izmaylov2}, demonstrating a numerical reduction in the value of $\lambda$ for all factorizations considered.
Recently, Rocca et al.~\cite{rocca} introduced the \emph{symmetry-compressed double factorization} (SCDF), improving on the CDF~\cite{cohn} and RCDF~~\cite{oumarou} methods in two ways. First, the circuit implementation of CDF and RCDF is more expensive than the original DF, which is addressed in~\cite{rocca}. Second, SCDF incorporates a simplified version of the method from ref.~\cite{izmaylov1} by replacing the original electronic Hamiltonian $H$ with $\tilde{H} := H + (b_1 N_e + b_2)(N_e - n_e)$, for optimized values of constants $b_1, b_2$.  

In this work, we propose a technique for further reducing $\lambda$ within the double factorization framework. 
We simplify the SCDF approach of Rocca et al.~\cite{rocca} and incorporate the full BLISS method~\cite{izmaylov1}. 
Our implementation achieves a minimum 25\% reduction in $\lambda_{DF}$ for complex systems such as FeMoCo~\cite{rocca,Reiher} and cytochrome P450 molecules~\cite{Goings2022}, compared with other techniques~\cite{rocca,izmaylov1,burg}.

\section{Methodology}
In this section, we lay out details of the double-factorization technique, formulated to naturally include the BLISS method, discussed in sec.~\ref{sec:symm-shift}.
\subsection{Definitions and conventions}
First, we introduce definitions and conventions used throughout the paper. 
Let $N \ge 1$ denote the number of orbitals encoded in the quantum states of $2N$ qubits, with each qubit representing a spin-orbital. The \emph{fermionic spin-orbital annihilation/creation operators}, $a_{j \sigma}, a^{\dag}_{j \sigma}$ respectively, destroy and create electrons in the spin-orbital $j,\sigma$, where $0 \le j < N$ and $\sigma \in \{0, 1\}$. These fermionic ladder operators satisfy the following canonical anti-commutation relations:
\begin{align}
	a_{j \sigma}^2 & = 0 \quad \textrm{ for all } j, \sigma, \\
	\left\{ a_{j \sigma}, a_{j \sigma}^\dg \right\} & = I \quad \textrm{ for all } j, \sigma,\\
	\left\{ a_{j \sigma}, a_{k \tau} \right \} & = 0 \quad \textrm{ if } j \neq k \textrm{ or } \sigma \neq \tau.
	\label{eq:commutation-relations-a}
\end{align}
The qubit encoding of these operators is achieved via the Jordan-Wigner transformation, which maps fermionic operators to Pauli operators as follows:
\begin{align}
a_{j \sigma} := \frac{1}{2} Z_0 Z_1 \dots Z_{j - 1 + N \sigma} \left( X_{j + N \sigma} + iY_{j + N \sigma} \right).\\
a_{j \sigma}^{\dg} := \frac{1}{2} Z_0 Z_1 \dots Z_{j - 1 + N \sigma} \left( X_{j + N \sigma} - iY_{j + N \sigma} \right).
\label{eq:annihilation-creation}
\end{align}
Here $Z_0$ is the Pauli-$Z$ gate acting on the first qubit (indexed from zero), and $iY_{j + N \sigma}$ represents the Pauli-$Y$ gate acting on the $(j+N\sigma)$-th qubit. We define the spin-independent, one-electron \textit{orbital excitation} operators by summing over the spin variable $\sigma$:
\begin{equation}
E_{ij} := \sum_{\sigma \in \{0, 1\} } a_{i \sigma}^{\dagger} a_{j \sigma}.
\label{eq:orbital-excitation}
\end{equation}
Using these definitions, the electronic Hamiltonian introduced in eq.~\ref{eqn:hamiltonian_form_first} can be expressed with the orbital excitation operators as: 
\begin{equation}
	H = \sum_{i,j = 0}^{N - 1} h_{ij} E_{ij} + 
	\sum_{i,j,k,l = 0}^{N - 1} g_{ijkl} E_{ij} E_{kl},
	 \label{eq:general_form_of_hamiltonian}
\end{equation}
where $h_{ij}, g_{ijkl}$ are real coefficients. The terms $h_{ij}$ represent the one-body matrix elements of the Hamiltonian (symmetric, $h_{ij} = h_{ji}$), while $g$ describe the two-body electron-electron interaction terms. The coefficients $g_{ijkl}$ exhibit the standard $8$-fold symmetry constraints: $g_{ijkl} = g_{jikl} = g_{ijlk} = g_{klij}$, as detailed in~\cite{Helgaker2000}.

\subsection{Block encoding of double-factorized electronic Hamiltonian}
\label{sec:block-encoding}
In this section, we discuss the double-factorization method for the electronic Hamiltonian and its block-encoding. For further details on the double-factorization approach, we refer the reader to~\cite{burg,thc,rocca,cohn,oumarou}. 
\paragraph{Block-encoding}
Let $A$ be a $2^m \times 2^m$ matrix. A unitary operator (circuit) $U$ acting on $n$ qubits (i.e. $U$ is a $2^n \times 2^n$ matrix) \emph{block encodes $A$} if 
\begin{equation}
	\left( \ket{\underbrace{0 \dots\dots 0}_{n - m \textrm{ zeros}} } \otimes I \right) U \left( \bra{\underbrace{0 \dots\dots 0}_{n - m \textrm{ zeros}}} \otimes I \right) = A. 
	\label{eq:block-encoding}
\end{equation}
The following Proposition summarizes elementary operations that can be performed on block-encoded matrices:
\begin{prop} \label{prop:block-encodings}
	The following operations on block encodings are possible:
	\begin{enumerate}[label=(\roman*)]
		\item If matrices $A_1, \dots A_k$ are block-encoded and $\alpha_1 \dots \alpha_k$ are arbitrary coefficients, then the linear combination $\frac{1}{| \alpha |} \sum_j \alpha_j A_j$, where $|\alpha| = \sum_j |\alpha_j|$, can also be block-encoded.
		Specifically, if $U_1 \dots U_k$ are quantum circuits acting that block encode $A_1 \dots A_k$ respectively, then it is possible to block-encode $\frac{1}{| \alpha |} \sum_j \alpha_j A_j$ with a single controlled application of each $U_j$ and $\mathcal{O}(k)$ additional gates.

		\item If a matrix $A$ is block-encoded, it is posssible to block-encode $2A^2 - I$. 
		Specifically, if $U$ is a quantum circuit acting on $n$ qubits that block encodes a $2^m \times 2^m$ matrix $A$, 
		the block-encoding of $2A^2 - I$ requires two applications of $U$ and $\mathcal{O}(n-m)$ additional gates.
		\item A quantum circuit $U$ is its own block encoding.
	\end{enumerate}
\end{prop}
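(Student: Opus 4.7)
The plan is to treat the three parts separately, each by recognizing it as a standard block-encoding construction. Part (iii) is immediate: the condition in eq.~\ref{eq:block-encoding} reduces to $U = U$ when $m = n$, so $U$ trivially block-encodes itself with zero ancillas. Parts (i) and (ii) are variants of the textbook LCU / qubitization toolkit, so the work is mainly to verify that the block defined by eq.~\ref{eq:block-encoding} contains the claimed matrix.

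For (i), I would use the \textsc{Prep}--\textsc{Select}--\textsc{Prep}$^\dagger$ construction. Introduce a $\lceil\log_2 k\rceil$-qubit index register, and define \textsc{Prep} by $\textsc{Prep}\ket{0} = \frac{1}{\sqrt{|\alpha|}}\sum_j \sqrt{|\alpha_j|}\, e^{i\phi_j}\ket{j}$, where $e^{i\phi_j}$ is the phase of $\alpha_j$ (in particular a sign for real coefficients); this costs $\mathcal{O}(k)$ gates by standard amplitude-encoding schemes. Define \textsc{Select} $= \sum_j \ket{j}\bra{j}\otimes U_j$, implemented as a sequence of $k$ controlled $U_j$'s. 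A direct calculation shows
\begin{equation}
(\bra{0}\otimes I)\,(\textsc{Prep}^\dagger \otimes I)\,\textsc{Select}\,(\textsc{Prep}\otimes I)\,(\ket{0}\otimes I) \;=\; \frac{1}{|\alpha|}\sum_j \alpha_j A_j,
\end{equation}
because the $\ket{0}\bra{0}$ sandwich on the index register picks out exactly the diagonal terms $\alpha_j$ paired with the top-left block $A_j$ of $U_j$. The ancilla structure absorbs the $n-m$ zero qubits of each $U_j$ plus the new index register, and the overhead is one controlled $U_j$ per term plus the $\mathcal{O}(k)$ gates of \textsc{Prep}.

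For (ii), the key observation is the Chebyshev identity. Let $\Pi = \ket{0^{n-m}}\bra{0^{n-m}}\otimes I$ be the projector onto the block-encoding subspace and set $R = 2\Pi - I$, a reflection implementable by a multi-controlled $Z$ on the $n-m$ ancilla qubits with $\mathcal{O}(n-m)$ Clifford$+T$ gates. I claim that $V := U^\dagger R\, U$ is the desired circuit. Indeed,
\begin{equation}
(\bra{0^{n-m}}\otimes I)\, V\, (\ket{0^{n-m}}\otimes I) \;=\; 2\,(\bra{0^{n-m}}\otimes I)\,U^\dagger\Pi U\,(\ket{0^{n-m}}\otimes I) \;-\; I \;=\; 2A^\dagger A - I,
\end{equation}
using $\Pi = (\ket{0^{n-m}}\bra{0^{n-m}})\otimes I$ to collapse the middle expression to $A^\dagger A$. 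When $A$ is Hermitian, as is the case for the Hamiltonian block-encodings of interest, this is precisely $2A^2 - I$. The circuit uses two calls to $U$ (one as $U^\dagger$) plus the reflection, matching the stated cost.

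The main obstacle is not conceptual but a bookkeeping one: making sure that the ancilla conventions of eq.~\ref{eq:block-encoding} are respected when several block-encodings are composed. In (i) the new index register must be disjoint from the ancillas of each $U_j$, and the phases of the $\alpha_j$ must be folded into \textsc{Prep} so that the $|\alpha|$-normalized sum comes out with the correct signs; in (ii) one must note that the construction requires $A$ to be Hermitian to identify $A^\dagger A$ with $A^2$, which is implicit in the statement and holds for the Hermitian operators used later in the paper.
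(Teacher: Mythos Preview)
The paper does not actually prove this proposition; it simply states ``For a detailed proof, we refer to section VII.A of the supplementary information in~\cite{burg}.'' Your sketch therefore supplies strictly more than the paper does, and it is the standard LCU/qubitization argument one would expect to find in the cited reference, so in that sense you are aligned with the paper's (deferred) approach.

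Your arguments for (i) and (iii) are correct as written. For (ii), your computation $(\bra{0^{n-m}}\otimes I)\,U^\dagger R U\,(\ket{0^{n-m}}\otimes I)=2A^\dagger A - I$ is correct, and you rightly flag that identifying this with $2A^2 - I$ requires $A$ Hermitian. That caveat is genuine: the proposition as stated makes no Hermiticity assumption, so strictly speaking your construction proves a slightly weaker claim. In the paper this does not matter, since part~(ii) is only ever invoked on the Hermitian operator $T_1(A)$ in eq.~\ref{eqn:block-enc-one}, exactly as you note. If you wanted to match the proposition verbatim for general $A$, you would need the extra structural assumption $U^2=I$ (so that $URU$ yields $2A^2-I$ directly), which is how the qubitization walk is typically set up in~\cite{burg}; but this is a cosmetic gap, not a substantive one.
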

For a detailed proof, we refer to section VII.A of the supplementary information in~\cite{burg}.

\paragraph{Double-factorization of the electronic Hamiltonian}
The double-factorization method begins by decomposing the two-body electronic energy tensor $g_{ijkl}$ as follows:
\begin{equation}
	g_{ijkl} = \tilde{g}_{ijkl}(R) + \epsilon_{ijkl},
	 \label{eq:g-tilde}
\end{equation}
where
\begin{equation}
 \tilde{g}_{ijkl}(R) = \sum_{r=0}^{R-1} A_{rij} A_{rkl}.
 \label{eq:first_decomp_step}
\end{equation}
Here $\tilde{g}_{ijkl}(R)$ is a decomposition of the rank-4 $g$ tensor into a sum of $R\leq N$ symmetric matrices $A_r$ (i.e. $A_{rij} = A_{rji}$). 
The truncation parameter $R$ is chosen such that $\tilde{g}_{ijkl}(R)$ approximates $g_{ijkl}$ within the error $||\epsilon||_F$, where $||.||_F$ is the Frobenius norm~\cite{oumarou}, defined as:
\begin{equation}
||\epsilon||_F	:= 
	\sqrt{\sum_{ijkl} |\epsilon_{ijkl}|^2} =
	\sqrt{ \sum_{ijkl} |g_{ijkl} - \tilde{g}_{ijkl}(R)|^2 }. 
\end{equation}	
To evaluate the quality of the approximation, the spectral norm can be used to compare the electronic Hamiltonians constructed from $g_{ijkl}$ and $ \tilde{g}_{ijkl}(R)$: $|| \sum_{ijkl} g_{ijkl} E_{ij} E_{kl} - \sum_{ijkl} \tilde{g}_{ijkl} E_{ij} E_{kl}||$. However, computing the spectral norm is as expensive as solving the electronic Schroedinger equation itself. Instead, as discussed in refs.~\cite{burg,thc,rocca}, a reasonable mitigation involves using the Frobenius norm for the electronic Hamiltonian matrix elements. Its quality has been benchmarked in refs.~\cite{rocca, thc}, showing consistency with the spectral-norm measures calculated at the CCSD(T) level of theory. For the benchmarked Hamiltonians the truncation parameter $R$ required to achieve chemical accuracy ($1$ mHartree) typically range in $R = 4N-6N$~\cite{rocca}. 

The decomposition given in eq.~\ref{eq:first_decomp_step} can be used to represent the electronic Hamiltonian in the following form:
\begin{equation} 
	\tilde{H}(R) = \sum_{i j} h_{ij} E_{ij}
	+ 	\sum_{i,j,k,l = 0}^{N - 1} \tilde{g}_{ijkl}(R) E_{ij} E_{kl} = \sum_{i j} h_{ij} E_{ij}+	\sum_{r=0}^{R-1} \left( \sum_{ij} A_{rij} E_{ij} \right)^2 =  One(A_{-1})  +
	\sum_{r=0}^{R-1} One(A_r)^2.
	\label{eq:hamiltonian_form_SF}
\end{equation}
where we adopted the notation from ref.~\cite{burg}, denoting $One(A) := \sum_{ij} A_{ij} E_{ij}$, and the one-body electronic operator is written as $One(A_{-1}) := \sum_{ij} h_{ij} E_{ij}$.
One-body operators $One(A)$ are further factorized to rank-1 operators via decomposition of the symmetric $N$-by-$N$ matrices $A$:
\begin{equation} 
	A_{ij} = \sum_t \lambda_{t} u_{ti} u_{tj}^{T},
	\quad \textrm{ where } u_{t} = (u_{ti})_{i=0 \dots N-1} \textrm{ is a unit length vector.}
	\label{eq:A_decomp}
\end{equation}
The form of eq.~\ref{eq:A_decomp} implies the quantity $\sum_t \lambda_t = \tr A$ equals the sum of eigenvalues of $A$.  In particular, the decomposition shown in eq.~\ref{eq:A_decomp} can be chosen as the eigenvalue decomposition of $A$. We note however that the double-factorization procedure does not require the vectors $u_t$ to be mutually orthogonal.

The decomposition given in eq.~\ref{eq:A_decomp} produces linear combinations of the fermionic ladder operators denoted as
\begin{equation}
	B_{u \sigma} := \sum_j u_j a_{j \sigma}. 
	\label{eq:B-op}
\end{equation}	
where $u = (u_0, \dots u_{N - 1})$ is a real unit vector and $a_{j \sigma}$ is the annihilation operator for spin-orbital $j$, defined in eq.~\ref{eq:annihilation-creation}.
The $B_{u\sigma}$ operator represents a rotation of the spin-orbital basis and satisfies fermionic commutation rules given in eq.~\ref{eq:commutation-relations-a}:
$$
B_{u \sigma}^\dg B_{u \sigma} + B_{u \sigma} B_{u \sigma}^\dg = 
\sum_i u_i^2 \left( a_{i \sigma}^\dg a_{i \sigma} + a_{i \sigma} a_{i \sigma}^\dg \right) + 
\sum_{i < j} u_i u_j \left( 
a_{i \sigma}^\dg a_{j \sigma} + a_{i \sigma} a_{j \sigma}^\dg +
a_{j \sigma}^\dg a_{i \sigma} + a_{j \sigma} a_{i \sigma}^\dg 
\right)
= I.
$$
A similar calculation shows that $B_{u \sigma}^2 = 0$. As a result we observe that: 
$$
\left( B_{u \sigma}^\dg B_{u \sigma} \right)^2 = 
B_{u \sigma}^\dg B_{u \sigma} B_{u \sigma}^\dg B_{u \sigma} =
B_{u \sigma}^\dg \left( I - B_{u \sigma}^\dg B_{u \sigma} \right) B_{u \sigma} =
B_{u \sigma}^\dg B_{u \sigma}.
$$
$B_{u \sigma}^\dg B_{u \sigma}$ is thus Hermitian and idempotent, hence represents an orthogonal projection. As a consequence, $V_{u\sigma} = 2 B_{u \sigma}^\dg B_{u \sigma} - I$ is unitary. This unitary can be efficiently implemented as a quantum circuit. Burg et al.~\cite{burg} proposed a sequence of \textit{Givens rotations} generated by the Majorana representation to decompose the unitary: $V_{u\sigma}  = i \gamma_{u \sigma 0} \gamma_{u \sigma 1}$, where 	$\gamma_{u \sigma x} := \sum_j u_j \gamma_{j \sigma x}$ and $\gamma_{j \sigma 0} = a_{j\sigma}+ a^{\dag}_{j\sigma}$,  $\gamma_{j \sigma 1} =-i(a_{j\sigma}- a^{\dag}_{j\sigma})$. Similarly, Lee et al.~\cite{thc} utilizes Givens rotations to efficiently represent $V_{u\sigma}$.  

The one-electron operators $One(A)$ can be expressed with $B_{u \sigma}$ and $B_{u \sigma}^{\dag}$ as follows:
\begin{equation}
	One(A) = 
	\sum_{ij} A_{ij} E_{ij} = 
	\sum_{tij}  \lambda_t u_{ti} u_{tj} E_{ij} =
	\sum_{t \sigma} \lambda_t 
	\left(\sum_i  u_{ti} a_{i \sigma}^\dg \right)
	\left(\sum_j  u_{tj} a_{j \sigma} \right) = 
	\sum_{t \sigma} \lambda_t B_{u_{t} \sigma}^\dg B_{u_{t} \sigma}.
	\label{eq:one-to-B}
\end{equation}
Because $2 B_{u_{t} \sigma}^\dg B_{u_{t} \sigma} - I$ is unitary for each $t=0 \dots N - 1$ and  $\sigma = 0, 1$, it represents its own block encoding. Thus, following Proposition~\ref{prop:block-encodings}(i), the block-encoding of the one-electron operators can be written as: 
\begin{equation} 
	T_1(A) := 
	\frac{1}{2 \sum_t |\lambda_{t}|}
	\sum_{t \sigma} \lambda_{t} (2 B_{u_{t} \sigma}^\dg B_{u_{t} \sigma} - I)
	= 
	\frac{1}{\sum_t |\lambda_{t}|} \left( One(A) - \tr A \right)
	= 
	\frac{1}{\Lambda} \left( One(A) - \tr A \right),
	\label{eq:temp1}
\end{equation}
where $\Lambda := \sum_t |\lambda_{t}|$.

Applying Proposition~\ref{prop:block-encodings}(ii) to eq.~\ref{eq:temp1} gives the block-encoding for the two-body operators as:
\begin{equation}
	T_2(A) :=
	2 \left( \frac{1}{\Lambda} \left( One(A) - \tr A \right) \right)^2 - I = 
	\frac{2}{\Lambda^2} \left( One(A)^2
	- 2 One(A) \tr A
	+ (\tr A)^2 \right) - I.
	\label{eqn:block-enc-one}
\end{equation}
According to eq.~\ref{eq:hamiltonian_form_SF}, there are $R+1$ matrices $A_r$ that form the decomposed electronic Hamiltonian. For each $r$, we assume a decomposition:
\begin{equation}
	A_{rij} = \sum_t \lambda_{rt} u_{rti} u_{rtj}
	\label{eq:Ar-decomp}
\end{equation}
where $\Lambda_r := \sum_t |\lambda_{rt}|$. Eigenvalue decomposition is optimal with respect to minimizing the overall block-encoding cost, as discussed in detail in Appendix~\ref{sec:appendixA}. For this reason, all rank-2-to-rank-1 tensor factorizations are assumed to be eigenvalue decompositions. 

Applying Proposition~\ref{prop:block-encodings}(i) to the operators $T_2(A_0), \dots T_2(A_{R-1})$ with weights $\Lambda_0^2 / 2, \dots \Lambda_{R - 1}^2 / 2$, gives a block-encoding for the sum of two-electron operators:
\begin{equation}
	T_3 := 
	\frac{1}{\frac{1}{2} \sum_r \Lambda_r^2} \sum_r \left( One(A_r)^2 - 2 One(A_r) \tr A_r \right)
	+ \textrm{constant term}. 
	 \label{eq:almost-tbt}
\end{equation}
\noindent
The constant term in eq.~\ref{eq:almost-tbt} is irrelevant to the block encoding and can be omitted. 
In eq.~\ref{eq:almost-tbt} the sum $\sum_r One(A_r)^2$ represents the block-encoding of the two-body electron operator given in eq.~\ref{eq:hamiltonian_form_SF}. The other component in the sum in eq.~\ref{eq:almost-tbt}, that is $- 2 \sum_r One(A_r) \tr A_r$, is incorporated into the one-body term. 
To simplify, we express the sum $\sum_r One(A_r) \tr A_r$ in terms of the orbital excitation operators defined in eq.~\ref{eq:orbital-excitation}:
$$
\sum_r One(A_r) \tr A_r = \sum_{rij} A_{rij} E_{ij} \tr A_r =
\sum_{ij} E_{ij} \cdot \left(\sum_k \sum_r A_{rij} A_{rkk} \right) = 
\sum_{ij} E_{ij} \left( \sum_k g_{ijkk} \right).
$$

\noindent
Accordingly, we define new matrix elements for the one-body operators: $h'_{ij} := h_{ij} + 2 \sum_k g_{ijkk}$ and perform eigendecomposition on $h'_{ij}$ in place of $h_{ij}$ given in eq.~\ref{eq:hamiltonian_form_SF}:
\begin{equation} 
	h'_{ij} = \sum_t \lambda_t u_{-1, i} u_{-1, j},
	\quad \textrm{ where } u_{-1} = (u_{-1, i})_{i=0 \dots N-1} \textrm{ is a unit length vector.}
	\label{eq:temp3}
\end{equation}
Using the new decomposition of the one-body terms in the electronic Hamiltonian, we repeat the block-encoding procedure shown in eq.~\ref{eq:temp1} for $h'_{ij}$ giving a block-encoding of the form $T_1(h') = \frac{1}{\Lambda_{-1}} \left( One(h') - \tr h' \right)$, 
where $\Lambda_{-1} := \sum_t |\lambda_t|$.  
Finally, we apply Proposition~\ref{prop:block-encodings}(i) to the two operators
$T_3$ and $T_1(h')$ with coefficients $\frac{1}{2} \sum_r \Lambda_r^2$ and $\Lambda_{-1}$.
This results in a block-encoding of the electronic Hamiltonian written as:
\begin{equation}
\mathbf{U}[H] = 	\frac{1}{\lambda_{DF}} 
	\left( 
	\left(\frac{1}{2} \sum_r \Lambda_r^2\right) T_3 + 
	\Lambda_{-1} T_1(h')
	\right) 
	=
	\frac{1}{\lambda_{DF}}
	\left(
	\sum_{r} One(A_r)^2 +
	\sum_{ij} h_{ij} E_{ij} 
	+ \textrm{const}
	\right)
	=
	\frac{1}{\lambda_{DF}} 
	\left( H + \textrm{const} \right),
	 \label{eq:temp2}
\end{equation}
where
\begin{equation} 
	\lambda_{DF} := \frac{1}{2} \sum_r \Lambda_r^2 + \Lambda_{-1}.
	\label{eq:lambda_DF}
\end{equation}
For the purposes of block-encoding, the constant term in eq.~\ref{eq:temp2} can be ignored.
In conclusion, the double-factorization method block encodes $H / \lambda_{DF}$, where $\lambda_{DF}$ is given in eq.~\ref{eq:lambda_DF}. 
The value of the $\lambda_{DF}$ constant given in eq.~\ref{eq:lambda_DF} depends on the decompositions shown in eq.~\ref{eq:first_decomp_step},eq.~\ref{eq:A_decomp} and eq.~\ref{eq:temp3}.

\section{Optimizing the double factorization method by norm minimization}
\label{sec:optimizing-norm}
Our objective is to minimize the value $\lambda_{DF}$, an optimization problem which can be formulated as follows:
\begin{equation} \label{problem1}
	\begin{aligned}
		\textrm{Find: } & \textrm{symmetric $N \times N$ matrices $A_0, \dots A_{R - 1}$}, \\
		\textrm{Constraint: } & ||\epsilon||_F^2 = \sum_{ijkl} | g_{ijkl} - \sum_r A_{rij} A_{rkl}|^2 \textrm{ sufficiently small}, \\
		\textrm{Minimize: } & \lambda_{DF} = \frac{1}{2} \sum_r \nuc{A_r}^2 + \nuc{h'}.
	\end{aligned}
\end{equation}
Note that $h'$ does not depend on the choice of $A_0 \dots A_{R-1}$ and therefore can be ignored in the optimization objective.
As discussed in Appendix~\ref{sec:appendixA} we use the value decomposition for each $A_r$ given in eq.~\ref{eq:Ar-decomp} and $h'$ given in eq.~\ref{eq:temp3}. Since $A_r$ and $h'$ are symmetric matrices, the appropriate block-encoding scaling constants are the respective nuclear norms:  $\Lambda_r = \nuc{A_r}$ and $\Lambda_{-1} = \nuc{h'}$. We define the nuclear norm as: $\nuc{A} := \sum_{t} \sigma_t$, where $\sigma_t$ are the singular values of $A$. In section~\ref{sec:symm-shift} we discuss a technique for lowering the value of $\lambda_{DF}$ utilizing the symmetry shift technique of Izmaylov and Loaiza~\cite{izmaylov1}.

\section{Hamiltonians parametrized by symmetry-shift operators}
\label{sec:symm-shift}
In Hamiltonian simulation, particularly for eigenvalue calculations with the quantum phase estimation algorithm~\cite{thc, burg}, the electronic Hamiltonian $H$ acts on a state $\ket{\psi}$ with a fixed \emph{number of electrons} $n_e$. The associated Hermitian operator $N_e := \sum_{i=1}^N E_{ii}$, known as \emph{number of electrons operator}, satisfies the equation: $(N_e - n_e) \ket{\psi} = 0$. 

Loaiza and Izmaylov~\cite{izmaylov1} exploit this property to replace the electronic Hamiltonian $H$ with a \textit{shifted} Hamiltonian $\tilde{H} := H + B(N_e - n_e)$, where $B$ stands for any operator. For $\tilde{H}$ to remain Hermitian, $B$ must also be Hermitian and commute with the number of electrons operator $N_e$. Non-hermiticity of $\tilde{H}$ would spoil the properties required for the double-factorization and is therefore undesirable. Following ref.~\cite{izmaylov1}, we define:
\begin{equation}
	\tilde{H} := H + \left( \sum_{ij \sigma} \xi_{ij} a_{i \sigma}^\dg a_{j \sigma} + \kappa \right) \left( N_e - n_e \right)
\end{equation}
where constants $\kappa$ and $\xi_{ij}=\xi_{ji}$ are subject to optimization. The symmetry-shifted electronic Hamiltonian $\tilde{H} =  H + B(N_e - n_e)$ is then converted to the form shown in eq.~\ref{eq:general_form_of_hamiltonian}. To preserve the $8$-fold orbital permutation symmetry of the two-electron tensor in the electronic Hamiltonian, we use the following identity:
$$ (E_{ij} + E_{ji}) (N_e - n_e) = -n_e E_{ij} -n_e E_{ji} + \sum_{r=0}^{R-1} \frac{1}{2} (E_{ij}E_{rr} + E_{ji}E_{rr} + E_{rr}E_{ij} + E_{rr}E_{ji}).
$$
Rewriting the symmetry-shifted electronic Hamiltonian in a permutationally invariant form carries a benefit that  $\tilde{H}$ takes the form of the original electronic Hamiltonian written as:
\begin{equation}
	\tilde{H} = 
	\tilde{c} + 
	\sum_{i,j = 0}^{N - 1} \tilde{h}_{ij} E_{ij} + 
	\sum_{i,j,k,l = 0}^{N - 1} \tilde{g}_{ijkl} E_{ij} E_{kl}, 
	 \label{eq:Htilde_form}
\end{equation}
where
\begin{align}
	\tilde{h}_{ij} & = h_{ij} - n_e \xi_{ij} + \kappa \delta_{ij}, \\ 
	\tilde{g}_{ijkl} & = g_{ijkl} + \frac{1}{2} \left( \xi_{ij} \delta_{kl} + \delta_{ij} \xi_{kl} \right).
\end{align}
Such a formulation enables to adopt the double-factorization technique of Burg et al.~\cite{burg} straightforwardly. For this reason, gate counting procedures, circuits and discussion of ref.~\cite{burg} remains valid in the present case. 

Using the symmetry-shifted Hamiltonian written in eq.~\ref{eq:Htilde_form}, we formulate the optimization problem outlined in section~\ref{sec:optimizing-norm} as follows:
\begin{equation} \label{problem2}
	\begin{aligned}
		\textrm{Find: }& 
		\kappa, 
		\textrm{ symmetric } N \times N \textrm{ matrix } \xi,	
		\textrm{ and symmetric } N \times N \textrm{ matrices } A_0, \dots A_{R-1}, \\
		\textrm{Constraint: } & \sum_{ijkl} | \tilde{g}_{ijkl} - \sum_r A_{rij} A_{rkl}|^2 \textrm{ sufficiently small}, 
		\textrm{ where } 
		\tilde{g}_{ijkl} = g_{ijkl} + \frac{1}{2} \left( \xi_{ij} \delta_{kl} + \delta_{ij} \xi_{kl} \right), \\
		\textrm{Minimize: } & \frac{1}{2} \sum_r \nuc{A_r}^2 + \nuc{\tilde{h}'},
		\textrm{ where } \tilde{h}_{ij} = h_{ij} - n_e \xi_{ij} + \kappa \delta_{ij}
		\textrm{ and } \tilde{h}'_{ij} := \tilde{h}_{ij} + 2 \sum_k \tilde{g}_{ijkk}.
	\end{aligned}
\end{equation}
The implementation of a solution to the problem of finding optimal $\kappa$ and $\xi_{ij}$ is discussed in 
section~\ref{sec:implementation}.

\section{Implementation details for the norm-reduction method}
\label{sec:implementation}
In this section we discuss implementation details for obtaining a solution to the optimization problem shown in eq.~\ref{problem2}. For this purpose, we define parametrized residue and block-encoding scaling constant as follows:
\begin{equation} \label{eqn:err_and_lambda}
	\begin{aligned}
		\Err (\kappa, \xi, A) & := \sum_{ijkl} \left( \tilde{g}_{ijkl} - \sum_r A_{rij} A_{rkl} \right)^2, \\
		\lambda(\kappa, \xi, A) & := \frac{1}{2} \sum_r \nuc{A_r}^2 + \nuc{\tilde{h}'}.
	\end{aligned}
\end{equation}
The problem stated in eq.~\ref{problem2} translates to the problem of minimizing $\lambda(\kappa, \xi, A)$ under the constraint $\Err(\kappa, \xi, A) = 0$. We define a cost function, similarly to the method discussed in ref.~\cite{rocca}:
\begin{equation}
	\Total(\kappa, \xi, A) := C_{\textrm{approx}} \Err(\kappa, \xi, A) + \lambda(\kappa, \xi, A),
\end{equation}
where $C_{\textrm{approx}}$ is a constant. As initial conditions for the optimization, we chosen $\kappa = 0, \xi_{ij} = 0$, and $A$ 
to be the standard double-factorization (that is, we treat the $g_{ijkl}$ tensor as a $N^2 \times N^2$ matrix, 
which is guaranteed to be symmetric and semipositive-definite, 
take eigendecomposition $g_{ijkl} = \sum_{r} d_{r} V_{rij} V_{rkl}$, then we set $A_{rij} := \sqrt{d_{r}} V_{rij}$).
The gradient descent method, as implemented in the Adam optimizer~\cite{adamoptimizer}, was used to minimize $\Total(\kappa, \xi, A)$. Our Python implementation uses the JAX package for efficient linear algebra operations \cite{jax2018github}.
Since JAX is capable of carrying out automatic differentiation, there was no need to compute the gradients symbolically. Despite the function $\Total(\kappa, \xi, A)$ being not differentiable at specific points, due to the nuclear norm involving absolute values of eigenvalues, the numerical gradient descent method performs without problems. The weight $C_\textrm{approx}$ is chosen arbitrarily and required a short trial and error procedure.

\section{Results}
\label{sec:results}
We demonstrate the performance of our implementation discussed in section~\ref{sec:implementation} on two example Hamiltonians: a Hamiltonian for the FeMoCo molecule from Reiher et al.~\cite{burg} and a Hamiltonian for the Cpd I structure in cytochrome P450~\cite{Goings2022}. We compare our present method with the standard double-factorization method of ref.~\cite{burg} (labeled XDF in Table~\ref{table}), the SCDF method of Rocca et al.~\cite{rocca}, and with the lower bounds calculated in ref.~\cite{ollitrault}. To measure the accuracy of our method, we give the value of $\Err(\kappa, \xi, A)$ in the last column. For both Hamiltonians, we find that our method achieves the values for the block-encoding scaling constant $\lambda$ lower by about $25\%$ compared to the previous works~\cite{rocca}. 

\begin{table}[!h]
	\renewcommand\arraystretch{1.3}
	\begin{tabular}{ |c|c|c|c|c|c| } 
		\hline
		Hamiltonian & 
		\hspace{1pt} N \hspace{1pt} & 
		\hspace{1pt} R \hspace{1pt}  & 
		Method 
		& $\lambda$ & 
		error \\
		\hline
		\multirow{4}*{FeMoco} &
		\multirow{4}*{54} &
		\multirow{4}*{6N} &
		XDF \cite{burg} 							& 296.0 	& $1.45 \times 10^{-5}$ \\ 
		& & & 	SCDF \cite{rocca}				& 77.9 	& -- \\ 
		& & & 	present work							& 57.9 	& $1.48 \times 10^{-5}$ \\ 
		& & & 	lowerbound \cite{ollitrault}		& 38.8	& n/a \\
		\hline
		\multirow{4}*{P450} &
		\multirow{4}*{58} &
		\multirow{4}*{6N} &
		XDF 		\cite{burg} 					& 472.9 	& $1.25 \times 10^{-6}$ \\ 
		& & & 	SCDF \cite{rocca}				& 111.0 	& -- \\ 
		& & & 	present work							& 82.8	& $2.40 \times 10^{-6}$ \\ 
		& & & 	lowerbound \cite{ollitrault}		& 49.9	& n/a \\
		\hline		
	\end{tabular}
	\caption{Comparison of the block-encoding scaling constant $\lambda$ calculated for the FeMoCo and Cpd I molecules with different methods. The columns are denoted respectively as: molecule symbol, the number of electronic orbitals $N$, the truncation parameter for the rank-4 to rank-2 factorization (cf.~\ref{eq:Ar-decomp}), the factorization method used, the value for the optimized block-encoding constant $\lambda$ and the error $\Err(\kappa, \xi, A)$.}
	\label{table}
\end{table}

\section{Conclusions}
We developed a technique for block-encoding the electronic Hamiltonian, which utilizes the Hamiltonian’s symmetries and tailored tensor-network decompositions. Our approach represents the Hamiltonian in a symmetry-shifted form introduced in ref.~\cite{izmaylov1}, preserving orbital-permutation symmetries. 
The resulting symmetry-shifted Hamiltonian is doubly-factorized with parameters optimized using a gradient descent method.
In calculations of the electronic Hamiltonian eigenvalues with quantum phase estimation, the algorithmic complexity counted in the number of T-gates is directly proportional to the block-encoding scaling constant. Our results demonstrate a 25\% reduction in this constant, compared to the best known implementations for molecules of both industrial and biological importance, including the Nitrogenase enzyme cofactor (FeMoCo) and cytochrome P450 (Cpd I structure). With the electronic orbital basis set sizes considered in this work, classical computing calculations of the electronic Hamiltonian eigenvalues for FeMoCo and P450 are presently beyond reach. Our protocol is applicable to the general class of electronic Hamiltonians and can be used to lower the fault-tolerant quantum computing cost of the Hamiltonian eigenvalue calculations.

\section{Acknowledgments}
This work is funded by the European Innovation Council accelerator grant COMFTQUA, no. 190183782.

\section{Appendix A: Optimality of eigenvalue decomposition at the second decomposition stage in the double-factorization method.}
\label{sec:appendixA}
In this appendix, we discuss factorization of rank-2 tensors (one-body operators) into rank-1 tensors. Suppose we performed the first factorization step, that is, we selected $A_{rij}$ such that~eq.\ref{eq:first_decomp_step} is satisfied. Our objective is to decompose each $A_r$ appearing in eq.~\ref{eq:Ar-decomp} and $h'$ appearing in eq.~\ref{eq:temp3} in the form given by eq.~\ref{eq:A_decomp}. In doing so, we want to minimize $\lambda_{DF}$, given by the formula in eq.~\ref{eq:lambda_DF}. Let $A$ be one of the matrices $A_0, \dots A_{R - 1}$ or $h'$. 
For each of these matrices, we solve the following optimization problem:

\begin{align*}
	\textrm{Find: } & \textrm{real numbers $\lambda_0, \dots \lambda_{N-1}$ and unit vectors $u_0 \dots u_{N-1}$}, \\
	\textrm{Constraint: } 
	& A_{ij} = \sum_{t} \lambda_t u_{ti} u_{tj}, \textrm{ or stated in another way, } A = \sum_t \lambda_t u_t u_t^T, \\
	\textrm{Minimize: } & \sum_t |\lambda_t|.
\end{align*}

The minimum is achieved by the eigenvalue decomposition, as summarized with the Proposition below:
\begin{prop} \label{thm:eigval_decomp_is_optimal}
	Let $A$ be a $N \times N$ matrix, and assume that $A = \sum_t \lambda_t u_t u_t^T$. 
	Then $\sum_t |\lambda_t| \ge \nuc{A}$, 
	where $\nuc{A}$ is the nuclear norm of $A$, 
	defined as $\nuc{A} := \sum_{t} \sigma_t$, where $\sigma_t$ are the singular values of $A$.
\end{prop}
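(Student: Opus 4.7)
The plan is to prove the inequality by recognising the right-hand side as the nuclear norm of $A$ and then invoking the triangle inequality for that norm on the given decomposition. Concretely, I would write
$$ A = \sum_t \lambda_t u_t u_t^T = \sum_t \left( \lambda_t u_t u_t^T \right), $$
and bound $\nuc{A}$ termwise using $\nuc{X + Y} \le \nuc{X} + \nuc{Y}$.

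The two facts needed to close the argument are elementary. First, the nuclear norm is genuinely a norm, so the triangle inequality applies to arbitrary finite sums. The cleanest way to see this is through the dual characterization
$$ \nuc{A} = \sup \left\{ \, \mathrm{tr}(M^T A) \; : \; M \in \R^{N \times N}, \; \|M\|_{\mathrm{op}} \le 1 \, \right\}, $$
which exhibits $\nuc{\cdot}$ as a supremum of linear functionals and hence a norm. Second, for any unit vector $u \in \R^N$, the matrix $u u^T$ is a rank-one orthogonal projection (since $u^T u = 1$), so its unique nonzero singular value equals $1$ and consequently $\nuc{u u^T} = 1$. Combined with positive homogeneity, this gives $\nuc{\lambda_t u_t u_t^T} = |\lambda_t|$ for each $t$.

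Putting the pieces together,
$$ \nuc{A} \;=\; \nuc{\sum_t \lambda_t u_t u_t^T} \;\le\; \sum_t \nuc{\lambda_t u_t u_t^T} \;=\; \sum_t |\lambda_t|, $$
which is the claimed inequality.

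There is no real obstacle here; the only point worth verifying carefully is the computation $\nuc{u u^T} = 1$, which reduces to checking that a rank-one symmetric matrix built from a unit vector has a single nonzero singular value equal to one. Note also that the bound is tight: for symmetric $A$, the eigenvalue decomposition $A = \sum_t \sigma_t \, \mathrm{sign}(\mu_t) \, v_t v_t^T$ (with $\mu_t$ the eigenvalues, $\sigma_t = |\mu_t|$ the singular values, and $v_t$ orthonormal eigenvectors) achieves $\sum_t |\lambda_t| = \nuc{A}$, so the eigenvalue decomposition is indeed an optimal choice in Problem~\ref{problem1}, justifying its use throughout the main text.
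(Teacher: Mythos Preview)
Your proof is correct but proceeds along a different line from the paper's. The paper argues concretely via the SVD: writing $A = U^T D V$ with $D$ diagonal containing the singular values, it multiplies the decomposition $A = \sum_t \lambda_t u_t u_t^T$ on the left by $U$ and on the right by $V^T$, then takes the trace of both sides and uses $|\tr(uv^T)| = |\langle u,v\rangle| \le 1$ for unit vectors to obtain $\nuc{A} = \tr D = \sum_t \lambda_t \langle U u_t, V u_t\rangle \le \sum_t |\lambda_t|$.

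Your argument is more abstract: you invoke the dual characterization $\nuc{A} = \sup_{\|M\|_{\mathrm{op}}\le 1}\tr(M^T A)$ to certify that $\nuc{\cdot}$ is a norm, and then the result drops out of the triangle inequality together with $\nuc{u_t u_t^T}=1$. The two routes are close cousins---the paper's computation is effectively evaluating the dual pairing at the specific maximizer $M = U^T V$---but your version makes explicit \emph{why} the inequality holds (sub-additivity of a norm), while the paper's version is entirely self-contained and avoids appealing to the duality result as a black box. Both are equally short; the paper's has the minor advantage of not importing an auxiliary fact about the nuclear norm. Your closing remark on tightness via the eigenvalue decomposition matches the paper's conclusion following the proposition.
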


\begin{proof}
	Let $A = U^TDV$ be the Singular Value Decomposition (SVD) decomposition of $A$. $U, V$ are orthogonal matrices, and $D$ is a diagonal matrix whose entries are the singular values $\sigma_0 \dots \sigma_{N-1}$ of $A$.
	Upon multiplying $A = \sum_t \lambda_t u_t u_t^T$ from the left by $U$ and from the right by $V^T$, we get
	\begin{equation}
		D = \sum_t \lambda_t ( U u_t) ( V u_t )^T
		\label{eq:D-matrix}
	\end{equation}
	Note that for any unit vectors $u, v$ we have $\tr uv^T = \langle u, v \rangle \in [-1, 1]$.
	We apply this property to the unit vectors $U u_t$ and $V u_t$ and take the trace over both sides in eq.~\ref{eq:D-matrix}, to get
	\begin{equation*}
		\nuc{A} = \sum_t \sigma_t = \tr D 
		= \sum_t \lambda_t \tr \left[ ( U u_t) ( V u_t )^T \right]
		\le \sum_t |\lambda_t|. \qedhere
	\end{equation*}

\end{proof}

From Proposition~\ref{thm:eigval_decomp_is_optimal} we conclude that for symmetric rank-2 tensors the nuclear norm sets a lower bound on the sum $ \sum_t |\lambda_t|$ in any decomposition of the form $A = \sum_t \lambda_t u_t u_t^T$. The nuclear norm can be calculated straightforwardly by diagonalization of the symmetric rank-2 tensor. In particular, the eigenvalue decomposition achieves this lower bound. 

\bibliography{literature.bib}
\end{document}